\definecolor{dartmouthgreen}{rgb}{0.05, 0.5, 0.06}
\newcommand{\bra}[1]{\langle {#1} |}
\newcommand{\ket}[1]{| {#1} \rangle}
\newtheorem{lemma}{Lemma}
\newtheorem{theorem}{Theorem}
\begin{document}

\title{Bipartite discrimination of independently prepared quantum states as a counterexample to a parallel repetition conjecture}
\author{Seiseki Akibue}
\email{seiseki.akibue.rb@hco.ntt.co.jp}
 \affiliation{NTT Communication Science Laboratories, NTT Corporation 3-1 Morinosato Wakamiya, Atsugi-shi, Kanagawa 243-0124, JAPAN}
 
\author{Go Kato}
\email{kato.go@lab.ntt.co.jp}
 \affiliation{NTT Communication Science Laboratories, NTT Corporation 3-1 Morinosato Wakamiya, Atsugi-shi, Kanagawa 243-0124, JAPAN}

\date{\today}

\begin{abstract}
For distinguishing quantum states sampled from a fixed ensemble, the gap in bipartite and single-party distinguishability can be interpreted as a  {\it nonlocality of the ensemble.} In this paper, we consider bipartite state discrimination in a composite system consisting of $N$ subsystems, where each subsystem is shared between two parties and the state of each subsystem is randomly sampled from a particular ensemble comprising the Bell states. We show that the success probability of perfectly identifying the state converges to $1$ as $N\rightarrow\infty$ if the entropy of the probability distribution associated with the ensemble is less than $1$, even if the success probability is less than $1$ for any finite $N$. In other words, the nonlocality of the $N$-{\it fold} ensemble asymptotically disappears if the probability distribution associated with each ensemble is concentrated. Furthermore, we show that the disappearance of the nonlocality can be regarded as a remarkable counterexample of a fundamental open question in theoretical computer science, called a {\it parallel repetition conjecture} of {\it interactive games} with two classically communicating players. Measurements for the discrimination task include a projective measurement of one party represented by stabilizer states, which enable the other party to perfectly distinguish states that are sampled with high probability.
\end{abstract}

\maketitle

\section{Introduction}
Various aspects of nonlocal properties of quantum mechanics have been investigated by considering multipartite information-processing tasks undertaken by joint quantum operations called {\it local operations and classical communication} (LOCC). Indeed, considered not to increase quantum correlation between the parties, LOCC is widely used for characterizing entanglement measures \cite{VVedral, Horodecki, MBPlenio} and nonlocal properties of unitary operations \cite{Soeda1, Stahlke, Soeda2}.

Another aspect of nonlocal properties is characterized by considering bipartite state discrimination. Bipartite state discrimination is a task where two parties, typically called Alice and Bob, perform a measurement implemented by LOCC to distinguish states sampled from an a priori known fixed ensemble of quantum states. By definition, the ability of the parties in bipartite state discrimination is more restricted than in single-party state discrimination as illustrated in Fig. \ref{fig:intro}(a) and (b). However, if each state constituting the ensemble is a {\it classical state}, namely a probabilistic mixture of the tensor products of two fixed mutually orthogonal states, there is no gap in bipartite and single-party distinguishability. Thus, when the gap exists, it can be interpreted as a {\it nonlocality of the ensemble}, which has been extensively studied.

\begin{figure}
 \centering
  \includegraphics[height=.15\textheight]{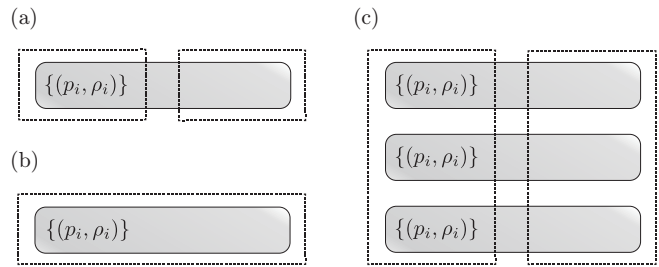}
  \caption{Graphical representations of three types of quantum state discrimination. Rounded rectangles represent quantum systems and dotted rectangles represent subsystems where joint quantum operations can be performed. The state of each system is randomly sampled from an a priori known ensemble $\{(p_i,\rho_i)\}$. (a) In bipartite state discrimination, a measurement is implemented by LOCC. (b) In single-party state discrimination, a measurement is implemented by a joint quantum operation. (c) Bipartite discrimination of quantum states sampled from a three-fold ensemble.}
\label{fig:intro}
\end{figure} 

Several studies have revealed the difference between the nonlocality of an ensemble and entanglement: for distinguishing any two entangled pure states, the ensemble is local, i.e., they can be optimally distinguished by LOCC as well as by joint measurement \cite{twostatediscrimination1, twostatediscrimination2}; there exist nonlocal ensembles comprising only product states \cite{productstate1, productstate2,productstate3,productstate4,productstate5}; and increasing the number of entangled states constituting an ensemble can decrease the nonlocality \cite{localdiscrimination}.
On the other hand, for distinguishing $M$ orthogonal maximally entangled states having local dimension $d$, any ensemble comprising such states (each state is sampled with non-zero probability) is nonlocal if $M>d$ \cite{optimalprob1, optimalprob2, optimalprob3, optimalprob4}. When $M=d$, any ensemble with $M=d=3$ is local \cite{optimalprob2} whereas there exists a nonlocal ensemble with $M=d=4$, which also demonstrates a novel phenomenon of entanglement discrimination catalysis \cite{optimalprob5}.
Furthermore, a novel application of the nonlocality of an ensemble was found in quantum data hiding \cite{datahiding1,datahiding2}. Recently, connections to other fundamental issues, such as the monogamy of entanglement \cite{datahidingstate2}, an area law \cite{datahidingstate3}, and a characterization of quantum mechanics in general probabilistic theories \cite{GPTs}, have also been found.

In this paper, we reveal a counterintuitive behavior of the nonlocality of an ensemble caused by entanglement, and show that it provides a remarkable counterexample of a fundamental open question in theoretical computer science, called a {\it parallel repetition conjecture} of {\it interactive games} with two players, which has been proven to be true in a classical scenario. We consider bipartite state discrimination in a composite system consisting of $N$ subsystems, where each subsystem is shared between Alice and Bob and the state of each subsystem is randomly sampled from a particular ensemble. In the discrimination task, they perform an LOCC measurement to distinguish states sampled from the $N$-{\it fold} ensemble formed by taking $N$ copies of one particular ensemble as illustrated in Fig. \ref{fig:intro}(c).

We consider that each ensemble comprises the Bell states, and show that the bipartite distinguishability approaches the single-party distinguishability as $N$ grows if the entropy of the probability distribution associated with each ensemble is less than a certain value. More precisely, we measure the distinguishability by the success probability of perfectly identifying a state used in the minimum-error discrimination \cite{statediscrimination}. We show that if the entropy condition is satisfied, the success probability in bipartite discrimination converges to $1$ as $N\rightarrow\infty$, even if the success probability in bipartite discrimination is less than $1$ for any finite $N$, namely,  the nonlocality of the $N$-fold ensembles asymptotically disappears. Since such a disappearance of the nonlocality occurs only if the mixed state corresponding to each ensemble is entangled, it also demonstrates a difference between the nonlocality of an ensemble and entanglement.

An {\it interactive proof system} is a fundamental notion of (probabilistic) computation in computational complexity theory \cite{AM,MIP,QIP,QMIP}, with important applications to modern cryptography \cite{ZK,QSZK} and hardness of approximation \cite{PCP1, PCP2}. Its general description is based on an {\it interactive game} \cite{quantumproofs} involving an interaction between a referee and players. The referee makes a fixed probabilistic trial to judge players to {\it win} or {\it lose} the game, and the players try to maximize the winning probability. If the maximum winning probability of an interactive game is less than $1$, it is natural to expect a {\it parallel repetition conjecture} of the game holds, namely, the maximum winning probability of the repeated game, where the referee simultaneously repeats the game independently and judges the players to win the repeated game if the players win all the games, decreases exponentially. If the parallel repetition conjecture holds, efficient error reduction of the computation in interactive proof systems is possible without increasing the round of interactions. The conjecture has been proven for interactive games with a single player \cite{KW, productrule3, quantumproofs} and with two separated classical players \cite{Raz, Holenstein}; however, it remains widely open as to whether the conjecture holds for interactive games with two quantum players, with several positive results for special cases \cite{parallelrep1, parallelrep2}.

Bipartite state discrimination can be regarded as an interactive game with two classically communicating quantum players, where the referee prepares the state of a composite system randomly sampled from an ensemble and judges the players to win the game if they guess the state correctly. To the best of our knowledge, it is unknown whether the parallel repetition conjecture of interactive games with two classically communicating players holds, which originates from an open problem posed in \cite{QMIPLOCC}. We show that the disappearance of the nonlocality in the state discrimination can be regarded as a remarkable counterexample of the conjecture, i.e., while the maximum winning probability of each game is less than $1$, that of the repeated game does not decrease; moreover, it asymptotically approaches $1$.

This paper is organized as follows: In Section II, we provide precise definitions of an $N$-fold ensemble and the distinguishability of states sampled from it and introduce some notations concerning the definitions. In Section III, we review some known upper bounds for the success probability of the identification in bipartite discrimination and apply them to show an asymptotic behavior of an upper bound of the success probability in our scenario. In Section IV, we construct an LOCC measurement for the success probability in bipartite discrimination to converge to $1$. In Section V, we review an interactive game and its parallel repetition conjecture and show that the disappearance of the nonlocality can be regarded as a counterexample of the parallel repetition conjecture of interactive games with two classically communicating players. The last section is devoted to conclusion and a discussion.

\section{Definitions and notations}
We denote the Hilbert space of Alice's system and Bob's system by $\mathcal{A}$ and $\mathcal{B}$, respectively. Suppose the state of the composite system $\mathcal{A}\otimes\mathcal{B}$ is randomly sampled from an a priori known ensemble of finite quantum states,
\begin{equation}
 \{(q_{\mathbf{m}},\ket{\Psi_{\mathbf{m}}}\in\mathcal{A}\otimes\mathcal{B})\}_{\mathbf{m}},
\end{equation}
where $q_{\mathbf{m}}$ is an element of a probability vector. (Note that in general, an ensemble can comprise mixed states in state discrimination; however, it is sufficient to consider pure states in our scenario.)

Moreover, we consider that Alice's system and Bob's system consist of $N$ subsystems $\mathcal{A}=\otimes_{n=1}^N\mathcal{A}_n$ and $\mathcal{B}=\otimes_{n=1}^N\mathcal{B}_n$ respectively, where $\mathcal{A}_n=\mathcal{B}_n=\mathbb{C}^2$ for all $n$, and a state of each subsystem $\mathcal{A}_n\otimes\mathcal{B}_n$ is randomly sampled from a particular ensemble comprising the Bell states $\{(p_{m},\ket{\Phi_{m}}\in\mathcal{A}_n\otimes\mathcal{B}_n):m\in F_2^2\}$, where $F_2^k$ is the direct product of $k$ finite fields of two elements, $\mathbf{p}=(p_{00},p_{01},p_{10},p_{11})$ is a probability vector, and
\begin{eqnarray}
 \ket{\Phi_{00}}=\frac{1}{\sqrt{2}}(\ket{00}+\ket{11}),\\
 \ket{\Phi_m}=(\mathbb{I}\otimes\sigma_{m})\ket{\Phi_{00}},\\
 \sigma_{00}=\mathbb{I},\sigma_{01}=X,\sigma_{10}=Z,\sigma_{11}=Y.
\end{eqnarray}
Note that $\mathbb{I}$ represents the identity operator, $X,Y,Z$ represent Pauli operators, and $\{\ket{0},\ket{1}\}$ is a fixed orthonormal basis of $\mathbb{C}^2$ such that $X\ket{x}=\ket{1-x}$, $Z\ket{x}=(-1)^x\ket{x}$ and $Y\ket{x}=(-1)^xi\ket{1-x}$ for $x\in \{0,1\}$. 

The $N$-fold ensemble formed by taking $N$ copies of an ensemble $\{(p_{m},\ket{\Phi_{m}}\in\mathcal{A}_n\otimes\mathcal{B}_n):m\in F_2^2\}$ is represented by $\{(q_{\mathbf{m}},\ket{\Psi_\mathbf{m}}\in\mathcal{A}\otimes\mathcal{B}):\mathbf{m}=(m_1,\cdots,m_N)\in F_2^{2N}\}$ such that
\begin{eqnarray}
q_{\mathbf{m}}&=&\prod_{n=1}^N p_{m_n}\label{eq:productprob}\\
 \ket{\Psi_{\mathbf{m}}}&=&\otimes_{n=1}^N\ket{\Phi_{m_n}}\\
 &=&(\mathbb{I}^{(\mathcal{A})}\otimes\sigma_{\mathbf{m}}^{(\mathcal{B})})\ket{\Phi_{00}}^{\otimes N},
\end{eqnarray}
where $\sigma_{\mathbf{m}}^{(\mathcal{B})}=\otimes_{n=1}^N\sigma_{m_n}$, the superscript of a linear operator represents the Hilbert space it acts on, and the order of the Hilbert spaces is appropriately permuted in $\otimes_{n=1}^N\ket{\Phi_{m_n}}$ and $\ket{\Phi_{00}}^{\otimes N}$.

Alice and Bob's measurement can be described by a {\it positive-operator valued measure} (POVM) $\{M_{\hat{\mathbf{m}}}\in P(\mathcal{A}\otimes\mathcal{B})\}_{\hat{\mathbf{m}}}$ satisfying $\sum_{\hat{\mathbf{m}}} M_{\hat{\mathbf{m}}}=\mathbb{I}$, where $P(\mathcal{A}\otimes\mathcal{B})$ represents a set of positive semidefinite operators on $\mathcal{A}\otimes\mathcal{B}$. When a state $\ket{\Psi_{\mathbf{m}}}$ is sampled, a measurement outcome $\hat{\mathbf{m}}$, corresponding to their estimation of $\mathbf{m}$, is obtained with probability given by $\bra{\Psi_{\mathbf{m}}}M_{\hat{\mathbf{m}}}\ket{\Psi_{\mathbf{m}}}$. Thus, the success probability of perfectly identifying a state is given by $\sum_{\mathbf{m}} q_{\mathbf{m}}\bra{\Psi_{\mathbf{m}}}M_{\mathbf{m}}\ket{\Psi_{\mathbf{m}}}$. The bipartite distinguishability of states sampled from an $N$-fold ensemble is measured by the {\it maximum} success probability of the identification, namely,
\begin{equation}
\label{eq:LOCCprob}
 \gamma=\sup\left\{\sum_{\mathbf{m}\in F_2^{2N}} q_{\mathbf{m}}\bra{\Psi_{\mathbf{m}}}M_{\mathbf{m}}\ket{\Psi_{\mathbf{m}}}:\{M_{\hat{\mathbf{m}}}\}_{\hat{\mathbf{m}}}\in  LOCC\right\},
\end{equation}
where $LOCC$ represents a set of POVMs implemented by LOCC between Alice and Bob. Note that the single-party distinguishability, where the supremum is taken over all POVMs in Eq.\eqref{eq:LOCCprob}, is always $1$ for any $N$ and $\mathbf{p}$ since $\{\ket{\Psi_{\mathbf{m}}}\}_{\mathbf{m}\in F_2^{2N}}$ is a set of orthogonal states.
As other measures of the distinguishability, the maximum success probability of unambiguous state discrimination \cite{unambiguous1, unambiguous2, unambiguous3} and the separable fidelity \cite{sepfid} have been also studied.

In our construction of an LOCC measurement given in Section \ref{sec:construction}, it is sufficient to consider an important subset of $LOCC$, a set of POVMs implemented by {\it one-way} LOCC from Alice to Bob.
Indeed, in many cases, one-way LOCC is sufficient for perfect discrimination when perfect bipartite state discrimination is possible \cite{twostatediscrimination1, localdiscrimination, optimalprob2, optimalprob5}.
In one-way LOCC (from Alice to Bob), first Alice performs a measurement on her own system described by a POVM $\{A_{\mathbf{a}}\in P(\mathcal{A})\}_{\mathbf{a}}$ and sends the measurement outcome $\mathbf{a}$ to Bob. Then Bob performs a measurement on his own system described by a POVM $\{B_{\hat{\mathbf{m}}|\mathbf{a}}\in P(\mathcal{B})\}_{\hat{\mathbf{m}}}$ based on $\mathbf{a}$. Thus, the maximum success probability of the identification by one-way LOCC is given by
\begin{eqnarray}
  \gamma_1=\max\Big\{\sum_{\mathbf{m}\in F_2^{2N}}\sum_{\mathbf{a}} q_{\mathbf{m}}\bra{\Psi_{\mathbf{m}}}A_{\mathbf{a}}\otimes B_{\mathbf{m}|\mathbf{a}}\ket{\Psi_{\mathbf{m}}}\nonumber\\
 :\{A_{\mathbf{a}}\}_{\mathbf{a}}{\rm \ and\ }\{B_{\hat{\mathbf{m}}|\mathbf{a}}\}_{\hat{\mathbf{m}}}{\rm \ are\ POVMs}\Big\}.
\label{eq:1LOCCprob}
\end{eqnarray}
By definition, $\gamma\geq\gamma_1$. Note that $\gamma_1$ is always achievable by some measurements implemented by one-way LOCC due to its compactness, in contrast to general LOCC \cite{productstate4, CLMOW}.

Since the parameters in an $N$-fold ensemble consist only of a probability vector $\mathbf{p}=(p_{00},p_{01},p_{10},p_{11})$ and the number of subsystems $N$, in the following sections, we denote the maximum success probabilities of the identification defined in Eq. \eqref{eq:LOCCprob} and Eq. \eqref{eq:1LOCCprob} by $\gamma^{(N)}(\mathbf{p})$ and $\gamma_1^{(N)}(\mathbf{p})$, respectively.

\section{Upper bound of LOCC measurements}
\label{sec:upperbound}
In \cite{optimalprob2} (simpler proof in \cite{optimalprob4}), it was shown that the success probability of identifying a state randomly sampled from an ensemble comprising $M$ equiprobable maximally entangled states having local dimension $d$ is at most $d/M$. Since $\ket{\Psi_{\mathbf{m}}}$ is a maximally entangled state having local dimension $2^N$ for any $\mathbf{m}\in F_2^{2N}$, by applying the result, we obtain
\begin{equation}
 \gamma^{(N)}\left(\left(\frac{1}{4},\frac{1}{4},\frac{1}{4},\frac{1}{4}\right)\right)\leq\frac{1}{2^N}.
 \label{eq:Nanthan}
\end{equation}
The upper bound in the right-hand side is achievable by performing one-way LOCC measurements to each subsystem independently: for each subsystem $\mathcal{A}_n\otimes\mathcal{B}_n$, Alice and Bob measure their own subsystem with respect to the fixed basis $\{\ket{0},\ket{1}\}$, compare the measurement results by one-way classical communication from Alice to Bob, and Bob guesses $m_n$ as $00$ if the measurement results agree and $m_n$ as $01$ if they disagree. 

By applying Theorem 4 in \cite{optimalprob6}, an upper bound of $\gamma^{(N)}(\mathbf{p})$ for a non-uniform probability vector $\mathbf{p}$ is obtained:
\begin{equation}
\label{eq:upperbound}
 \gamma^{(N)}(\mathbf{p})\leq\max\left\{\sum_{\mathbf{m}\in X}q_{\mathbf{m}}:X\subseteq F_2^{2N},|X|=2^N\right\}.
\end{equation}
Note that this upper bound can also be obtained by simply using Eq. \eqref{eq:Nanthan} as shown in Appendix \ref{appendix:upperbound}. For $N=1$, the upper bound is tight, namely,
\begin{equation}
 \gamma^{(1)}(\mathbf{p})= p_{a_0}+p_{a_1},
\end{equation}
where $\{a_0,a_1,a_2,a_3\}=F_2^2$ such that $p_{a_0}\geq p_{a_1}\geq p_{a_2}\geq p_{a_3}$. Since any set of two Bell states is locally unitarily equivalent to $\{\ket{\Phi_{00}},\ket{\Phi_{01}}\}$ \cite{LUequivalent}, the upper bound is achievable by one-way LOCC.

Using Eq. \eqref{eq:upperbound}, we can easily verify that the success probability $\gamma^{(N)}(\mathbf{p})$ is less than $1$ for any finite $N$ if and only if the number of non-zero elements in a probability vector $\mathbf{p}$ is greater than or equal to $3$. Furthermore, we can show a condition where the success probability $\gamma^{(N)}(\mathbf{p})$ converges to 0 as $N\rightarrow\infty$.

\begin{theorem}
\label{theorem:upperbound}
 Let $H(\mathbf{p})=-\sum_{x\in F_2^{2}}p_x\log p_x$ be the entropy of a probability vector $\mathbf{p}=(p_{00},p_{01},p_{10},p_{11})$. If $H(\mathbf{p})>1$,
\begin{equation}
 \lim_{N\rightarrow\infty}\gamma^{(N)}(\mathbf{p})=0.
\end{equation}
\end{theorem}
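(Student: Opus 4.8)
The starting point is the combinatorial upper bound \eqref{eq:upperbound}. The plan is to first recognize that the maximum of $\sum_{\mathbf{m}\in X}q_{\mathbf{m}}$ over all $X\subseteq F_2^{2N}$ with $|X|=2^N$ is attained simply by collecting the $2^N$ largest values of $q_{\mathbf{m}}$, so the right-hand side equals the total weight carried by the $2^N$ most probable sequences. I would then regard $\mathbf{m}=(m_1,\dots,m_N)$ as a random string whose coordinates are drawn i.i.d.\ from $\mathbf{p}$, so that $q_{\mathbf{m}}=\prod_n p_{m_n}$ is exactly the associated product probability and (all logarithms base $2$) $-\log q_{\mathbf{m}}=\sum_{n=1}^N(-\log p_{m_n})$ is a sum of i.i.d.\ random variables whose common mean is $H(\mathbf{p})$. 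This reframing converts the combinatorial bound into a statement about a large sum of i.i.d.\ terms.

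The main technical step is a threshold-splitting estimate valid for every $X$ with $|X|=2^N$ and any $t>0$. Splitting $X$ according to whether $q_{\mathbf{m}}\ge t$, the high-weight part is bounded by the full tail mass while the low-weight part is controlled by a crude count:
\begin{equation}
\sum_{\mathbf{m}\in X}q_{\mathbf{m}}\le \Pr[q_{\mathbf{m}}\ge t]+2^N t,
\end{equation}
since at most $2^N$ elements of $X$ can have $q_{\mathbf{m}}<t$, and $\Pr[q_{\mathbf{m}}\ge t]=\sum_{\mathbf{m}:q_{\mathbf{m}}\ge t}q_{\mathbf{m}}$ is the mass the product distribution places on the atypically likely sequences. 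Combined with \eqref{eq:upperbound}, this gives $\gamma^{(N)}(\mathbf{p})\le \Pr[q_{\mathbf{m}}\ge t]+2^N t$ for every choice of $t$.

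It then remains to choose $t$ so that both terms vanish. Because $H(\mathbf{p})>1$, I would fix $\delta\in(0,H(\mathbf{p})-1)$ and set $t=2^{-N(1+\delta)}$, which makes the count term $2^N t=2^{-N\delta}\to 0$. For the tail term, observe that $q_{\mathbf{m}}\ge t$ is equivalent to $\tfrac1N\sum_n(-\log p_{m_n})\le 1+\delta$; since $1+\delta<H(\mathbf{p})$ lies strictly below the mean, the weak law of large numbers forces $\Pr[q_{\mathbf{m}}\ge t]\to 0$. Hence both contributions tend to $0$ and $\gamma^{(N)}(\mathbf{p})\to 0$. The only real subtlety is arranging the two terms to vanish simultaneously: the count term pushes $t$ down while the tail term pushes $t$ up, and the hypothesis $H(\mathbf{p})>1$ is precisely what guarantees a nonempty window $1<1+\delta<H(\mathbf{p})$ in which both succeed. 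If a convergence rate were desired, the law-of-large-numbers step could be strengthened to a Cram\'er/Chernoff large-deviation bound, yielding exponential decay of the tail and hence of $\gamma^{(N)}(\mathbf{p})$.
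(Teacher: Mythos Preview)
Your proof is correct and follows essentially the same strategy as the paper: both start from the combinatorial bound \eqref{eq:upperbound}, split the sum according to whether $q_{\mathbf{m}}$ exceeds a threshold, control the large-$q_{\mathbf{m}}$ part by a law-of-large-numbers/typicality estimate, and control the small-$q_{\mathbf{m}}$ part by the crude count $|X|=2^N$ times the threshold. The only cosmetic differences are that the paper uses the two-sided typical set $T(\epsilon)$ and Hoeffding's inequality (yielding an explicit exponential rate), whereas you use a one-sided threshold and the weak law of large numbers; your closing remark about upgrading to a Cram\'er/Chernoff bound recovers exactly the paper's quantitative version.
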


\begin{proof}
 We define a set of {\it typical sequences} $T(\epsilon)\subseteq F_2^{2N}$ as
 \begin{equation}
 T(\epsilon)=\left\{\mathbf{m}\in F_2^{2N}:\left|-\frac{1}{N}\log q_\mathbf{m}-H(\mathbf{p})\right|<\epsilon\right\}
\end{equation}
for $\epsilon>0$. It is obvious that if $\mathbf{m}\in T(\epsilon)$,
\begin{equation}
\label{eq:rangeqm}
 2^{-N(H(\mathbf{p})+\epsilon)}<q_{\mathbf{m}}< 2^{-N(H(\mathbf{p})-\epsilon)},
\end{equation}
and thus
\begin{equation}
 |T(\epsilon)|<2^{N(H(\mathbf{p})+\epsilon)}.
\end{equation}
By the {\it asymptotic equipartition property},
\begin{equation}
\label{eq:AEP}
  \sum_{\mathbf{m}\notin T(\epsilon)}q_{\mathbf{m}}\leq 2\exp\left(-2\frac{\epsilon^2}{\Delta^2}N\right),
\end{equation}
where $\Delta$ is a non-negative real number defined by $\Delta=\log (\max\{p_x\})-\log (\min\{p_x\})$.
An explicit derivation of Eq. \eqref{eq:AEP} is given in Appendix \ref{appendix:AEP}. Therefore, for any $N$ and $\epsilon>0$ and for any $X\subseteq F_2^{2N}$ satisfying $|X|=2^N$,
\begin{eqnarray}
 \sum_{\mathbf{m}\in X}q_{\mathbf{m}}&\leq&\sum_{\mathbf{m}\in X\cap T(\epsilon)}q_{\mathbf{m}}+\sum_{\mathbf{m}\notin T(\epsilon)}q_{\mathbf{m}}\\
 \label{eq:upperb}
 &<&2^{N(1-H(\mathbf{p})+\epsilon)}+2\exp\left(-2\frac{\epsilon^2}{\Delta^2}N\right).
\end{eqnarray}
Hence, if $1-H(\mathbf{p})<0$, there exists $\epsilon>0$ such that the right-hand side converges to $0$ as $N\rightarrow\infty$ since $\Delta$ is a constant when $N$ changes. Since the right hand side of Eq. \eqref{eq:upperbound} is also bounded by Eq. \eqref{eq:upperb}, this completes the proof.
\end{proof}

Note that this convergence condition is tight in the sense that there exists a probability vector $\mathbf{p}$ such that $H(\mathbf{p})\geq1$ but the success probability $\gamma^{(N)}(\mathbf{p})$ does not converge to $0$. Indeed, for $\mathbf{p}=\left(\frac{1}{2},\frac{1}{2},0,0\right)$, $H(\mathbf{p})=1$ and $\gamma^{(N)}(\mathbf{p})=1$ for any $N$.

\section{Construction of an LOCC measurement}
\label{sec:construction}
In this section, we show that the success probability of the identification, $\gamma^{(N)}(\mathbf{p})$, converges to $1$ if the entropy of a probability vector $H(\mathbf{p})$ is less than $1$ by constructing a one-way LOCC measurement.
The one-way LOCC measurement consists two steps:
\begin{enumerate}
 \item Alice performs a projective measurement described by $\{A_{\mathbf{a}}=\ket{\phi_{\mathbf{a}}}\bra{\phi_{\mathbf{a}}}\}_{\mathbf{a}}$, where $\{\ket{\phi_{\mathbf{a}}}\in\mathcal{A}\}_{\mathbf{a}}$ is an orthonormal basis of $\mathcal{A}$. 
 
 \item Bob performs a measurement on his system depending on Alice's measurement outcome described by $\{B_{\hat{\mathbf{m}}|\mathbf{a}}:\hat{\mathbf{m}}\in F_2^{2N}\}$. 
 \end{enumerate}

When a state $\ket{\Psi_{\mathbf{m}}}$ is sampled, the (unnormalized) state of Bob's system after Alice's measurement is given by
\begin{eqnarray}
 (\bra{\phi_{\mathbf{a}}}\otimes\mathbb{I}^{(\mathcal{B})})\ket{\Psi_{\mathbf{m}}}&=&(\bra{\phi_{\mathbf{a}}}\otimes\sigma_{\mathbf{m}}^{(\mathcal{B})})\ket{\Phi_{00}}^{\otimes N}\\
 &=&\frac{1}{\sqrt{2^N}}\sigma_{\mathbf{m}}^{(\mathcal{B})}\ket{\phi^*_{\mathbf{a}}},
\end{eqnarray}
where $\ket{\phi^*_{\mathbf{a}}}\in\mathcal{B}$ is the complex conjugate of $\ket{\phi_{\mathbf{a}}}$ with respect to the fixed basis. Note that $\{\ket{\phi^*_{\mathbf{a}}}\in\mathcal{B}\}_{\mathbf{a}}$ is an orthonormal basis if and only if $\{\ket{\phi_{\mathbf{a}}}\in\mathcal{A}\}_{\mathbf{a}}$ is. Therefore, the success probability $\gamma_1^{(N)}(\mathbf{p})$ defined by Eq. \eqref{eq:1LOCCprob} is bounded by
\begin{eqnarray}
\label{eq:lowerbound}
 \gamma_1^{(N)}(\mathbf{p})\geq\frac{1}{2^N}\sum_{\mathbf{m}\in F_2^{2N}}\sum_{\mathbf{a}}q_{\mathbf{m}}\bra{\phi^*_{\mathbf{a}}}\sigma_{\mathbf{m}}B_{\mathbf{m}|\mathbf{a}}\sigma_{\mathbf{m}}\ket{\phi^*_{\mathbf{a}}}.
\end{eqnarray}

We choose each state $\ket{\phi^*_{\mathbf{a}}}\in\mathcal{B}$ from a {\it stabilizer state}, which is widely used in quantum error correction \cite{GotPhD}, quantum computation \cite{GotKnill}, and measurement-based quantum computation \cite{MBQC}. Suppose $\mathcal{S}$ is a subgroup of an $N$-qubit Pauli group $\{\pm1,\pm i\}\times\{\sigma_{\mathbf{s}}:\mathbf{s}\in F_2^{2N}\}$. An $N$-qubit state $\ket{\psi}\in\mathbb{C}^{2^N}$ is {\it stabilized} by $\mathcal{S}$ if $\ket{\psi}$ is a simultaneous eigenstate of all elements of $\mathcal{S}$ with the eigenvalue $+1$:
\begin{equation}
 \forall S\in\mathcal{S},S\ket{\psi}=\ket{\psi}.
\end{equation}
It is known that stabilized state $\ket{\psi}$ is uniquely determined (up to a global phase) if and only if subgroup $\mathcal{S}$ is generated as a product of generators $\langle g_1,\cdots,g_N\rangle$, where each generator $g_n$ is taken from a subset of the Pauli group as $g_n\in\{\pm \sigma_{\mathbf{s}(n)}\}$, and the generators are commutative and {\it independent} in the sense that $\{\mathbf{s}(n)\in F_2^{2N}\}_{n=1}^N$ is linearly independent. Note that an orthonormal basis of $N$-qubit $\{\ket{\psi_{\mathbf{a}}}:\mathbf{a}=(a_1,\cdots,a_N)\in F_2^{N}\}$ can be constructed by taking each $\ket{\psi_{\mathbf{a}}}$ as a state stabilized by $\langle (-1)^{a_1}g_1,\cdots,(-1)^{a_N}g_N\rangle$ since two eigenspaces of the Pauli group corresponding to different eigenvalues are orthogonal. If we construct an orthonormal basis $\{\ket{\phi^*_{\mathbf{a}}}\in\mathcal{B}\}_{\mathbf{a}}$ using stabilizer states, Bob's measurement can be significantly simplified using the following lemma:

\begin{lemma}
\label{lemma:unitary}
 Let $\{\ket{\psi_{\mathbf{a}}}\in\mathbb{C}^{2^N}:\mathbf{a}=(a_1,\cdots,a_N)\in F_2^N\}$ be an orthonormal basis stabilized by
   \begin{equation}
 \langle(-1)^{a_1}g_1,\cdots,(-1)^{a_N}g_N\rangle,
 \label{eq:stabilizer}
\end{equation}
 where $\{g_n\}_{n=1}^N$ is a set of commutative and independent elements of $\{\sigma_{\mathbf{s}}:\mathbf{s}\in F_2^{2N}\}$. Then, for any $\mathbf{a}\in F_2^{N}$, there exists a unitary operator $U_{\mathbf{a}}\in U(\mathbb{C}^{2^N})$ such that for any $\mathbf{m}\in F_2^{2N}$,
 \begin{equation}
 \sigma_{\mathbf{m}}\ket{\psi_{\mathbf{a}}}\propto U_{\mathbf{a}}\sigma_{\mathbf{m}}\ket{\psi_{\mathbf{0}}},
\end{equation}
where $U(\mathbb{C}^{2^N})$ represents a set of $N$-qubit unitary operators.
\end{lemma}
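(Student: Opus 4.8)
The plan is to show that $\ket{\psi_{\mathbf{a}}}$ differs from $\ket{\psi_{\mathbf{0}}}$ only by a fixed Pauli operator, and then to take $U_{\mathbf{a}}$ to be exactly that Pauli. Concretely, I would first look for a Pauli $\sigma_{\mathbf{t}}$ (with $\mathbf{t}\in F_2^{2N}$) such that $\sigma_{\mathbf{t}}\ket{\psi_{\mathbf{0}}}\propto\ket{\psi_{\mathbf{a}}}$, and then set $U_{\mathbf{a}}=\sigma_{\mathbf{t}}$. The whole lemma would then follow from the fact that any two Pauli operators commute up to a sign, so $\sigma_{\mathbf{m}}$ can be pushed through $\sigma_{\mathbf{t}}$ at the cost of an $\mathbf{m}$-dependent sign, which is harmless because the statement only asks for proportionality.

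First I would establish the existence of the desired $\mathbf{t}$. Writing $g_n=\sigma_{\mathbf{s}(n)}$, recall that two Paulis satisfy $\sigma_{\mathbf{t}}\sigma_{\mathbf{s}(n)}=(-1)^{\omega(\mathbf{t},\mathbf{s}(n))}\sigma_{\mathbf{s}(n)}\sigma_{\mathbf{t}}$, where $\omega$ is the nondegenerate symplectic form on $F_2^{2N}$ that governs the (anti)commutation of Paulis. Hence $g_n\sigma_{\mathbf{t}}\ket{\psi_{\mathbf{0}}}=(-1)^{\omega(\mathbf{t},\mathbf{s}(n))}\sigma_{\mathbf{t}}g_n\ket{\psi_{\mathbf{0}}}=(-1)^{\omega(\mathbf{t},\mathbf{s}(n))}\sigma_{\mathbf{t}}\ket{\psi_{\mathbf{0}}}$, so $\sigma_{\mathbf{t}}\ket{\psi_{\mathbf{0}}}$ is a $(-1)^{\omega(\mathbf{t},\mathbf{s}(n))}$-eigenstate of $g_n$. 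Thus it suffices to find $\mathbf{t}$ with $\omega(\mathbf{t},\mathbf{s}(n))=a_n$ for all $n$. Because the generators are independent, the vectors $\{\mathbf{s}(n)\}_{n=1}^N$ are linearly independent over $F_2$, and since $\omega$ is nondegenerate the linear map $\mathbf{t}\mapsto(\omega(\mathbf{t},\mathbf{s}(1)),\cdots,\omega(\mathbf{t},\mathbf{s}(N)))$ from $F_2^{2N}$ to $F_2^N$ is surjective; therefore a solution $\mathbf{t}=\mathbf{t}(\mathbf{a})$ exists. For this $\mathbf{t}(\mathbf{a})$, the state $\sigma_{\mathbf{t}(\mathbf{a})}\ket{\psi_{\mathbf{0}}}$ is stabilized by $\langle(-1)^{a_1}g_1,\cdots,(-1)^{a_N}g_N\rangle$, hence by the uniqueness of the stabilized state it equals $\ket{\psi_{\mathbf{a}}}$ up to a global phase.

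With $U_{\mathbf{a}}:=\sigma_{\mathbf{t}(\mathbf{a})}$ fixed, I would finish by computing $\sigma_{\mathbf{m}}\ket{\psi_{\mathbf{a}}}\propto\sigma_{\mathbf{m}}\sigma_{\mathbf{t}(\mathbf{a})}\ket{\psi_{\mathbf{0}}}=(-1)^{\omega(\mathbf{m},\mathbf{t}(\mathbf{a}))}\sigma_{\mathbf{t}(\mathbf{a})}\sigma_{\mathbf{m}}\ket{\psi_{\mathbf{0}}}=(-1)^{\omega(\mathbf{m},\mathbf{t}(\mathbf{a}))}U_{\mathbf{a}}\sigma_{\mathbf{m}}\ket{\psi_{\mathbf{0}}}$, where the sign is absorbed into the proportionality. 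Since $U_{\mathbf{a}}$ does not depend on $\mathbf{m}$, this is exactly the claim. The only genuinely nontrivial step is the existence of $\mathbf{t}(\mathbf{a})$, i.e. the surjectivity argument; everything else is bookkeeping of Pauli phases. The point I would be most careful about is that the independence hypothesis on the generators is precisely what makes the $N\times 2N$ array of symplectic products attain full rank $N$, so that every sign pattern $\mathbf{a}\in F_2^N$ is realizable — without independence one could not reach all $2^N$ basis states and the lemma would fail. I expect the remaining manipulations of the Pauli commutation rules to be routine.
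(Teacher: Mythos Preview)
Your argument is correct and close in spirit to the paper's, but the execution differs in a useful way. The paper first records the effect of any Pauli on the stabilizer basis as $\sigma_{\mathbf{m}}\ket{\psi_{\mathbf{a}}}\propto\ket{\psi_{\mathbf{a}+GP\mathbf{m}}}$ with $G=(\mathbf{s}(1),\ldots,\mathbf{s}(N))^T$, then uses $\mathrm{rank}(GP)=N$ (the same full-rank/surjectivity fact you invoke) to reduce the claim to the existence of a unitary sending $\ket{\psi_{\mathbf{c}}}\mapsto\ket{\psi_{\mathbf{c}+\mathbf{a}}}$ for all $\mathbf{c}\in F_2^N$, which holds abstractly because both families are orthonormal bases. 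You instead \emph{construct} $U_{\mathbf{a}}$ explicitly as a Pauli $\sigma_{\mathbf{t}(\mathbf{a})}$ by solving the symplectic system $\omega(\mathbf{t},\mathbf{s}(n))=a_n$, and then finish with a single commutation identity. Both proofs rest on the same linear-algebra core (independence of the generators $\Rightarrow$ surjectivity of the syndrome map), but your route yields the extra information that $U_{\mathbf{a}}$ can be taken to be a Pauli operator, whereas the paper's route avoids solving for $\mathbf{t}$ and only asserts existence of some unitary.
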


\begin{proof}
Let $g_n=\sigma_{\mathbf{s}(n)}$, where $\{\mathbf{s}(n)\in F_2^{2N}\}_{n=1}^N$ is linearly independent. Let $G=(\mathbf{s}(1),\cdots,\mathbf{s}(N))^T$ be a $N\times 2N$ matrix over $F_2$. By straightforward calculation, we obtain
 \begin{equation}
 \sigma_{\mathbf{m}}\ket{\psi_{\mathbf{a}}}\propto \ket{\psi_{\mathbf{a}+GP\mathbf{m}}},
\end{equation}
where $P$ is a $2N\times 2N$ matrix over $F_2$ such that
\begin{equation}
 P=\oplus_{n=1}^N
 \begin{pmatrix}
 0&&1\\
 1&&0
\end{pmatrix}.
\end{equation}
Since $rank(G)=rank(GP)=N$, there exists linearly independent $N$ columns in $GP$. Thus,
\begin{equation}
 \exists U_{\mathbf{a}}\in U(\mathbb{C}^{2^N}),\forall\mathbf{m}\in F_2^{2N},\ket{\psi_{\mathbf{a}+GP\mathbf{m}}}\propto U_{\mathbf{a}}\ket{\psi_{GP\mathbf{m}}}
\end{equation}
is equivalent to
\begin{equation}
 \exists U_{\mathbf{a}}\in U(\mathbb{C}^{2^N}),\forall\mathbf{c}\in F_2^{N},\ket{\psi_{\mathbf{c}+\mathbf{a}}}\propto U_{\mathbf{a}}\ket{\psi_{\mathbf{c}}}.
\end{equation}
This is true since $\{\ket{\psi_{\mathbf{c}+\mathbf{a}}}\}_{\mathbf{c}\in F_2^N}$ is an orthonormal basis for any $\mathbf{a}\in F_2^N$.
\end{proof}

Suppose $\{\ket{\phi^*_{\mathbf{a}}}\in\mathcal{B}\}_{\mathbf{a}\in F_2^N}$ is an orthonormal basis $\{\ket{\psi_{\mathbf{a}}}\}_{\mathbf{a}\in F_2^N}$ defined in Lemma \ref{lemma:unitary}, and Bob's measurement is represented by $B_{\hat{\mathbf{m}}|\mathbf{a}}=U_{\mathbf{a}}B_{\hat{\mathbf{m}}}U_{\mathbf{a}}^{\dag}$, where $\{B_{\hat{\mathbf{m}}}\in P(\mathcal{B})\}_{\hat{\mathbf{m}}\in F_2^{2N}}$ is a POVM and $\{U_{\mathbf{a}}\}_{\mathbf{a}\in F_2^N}$ is a set of unitary operators defined in Lemma \ref{lemma:unitary}. Due to Lemma \ref{lemma:unitary} and Eq. \eqref{eq:lowerbound}, the success probability $\gamma_1^{(N)}(\mathbf{p})$ is bounded by
\begin{eqnarray}
\label{eq:lowerbound2}
\gamma_1^{(N)}(\mathbf{p})&\geq& \eta^{(N)}(\mathbf{p},\ket{\xi^{(N)}})\\
\eta^{(N)}(\mathbf{p},\ket{\xi^{(N)}})&:=& \max\Big\{\sum_{\mathbf{m}\in F_2^{2N}}q_{\mathbf{m}}\bra{\xi^{(N)}}\sigma_{\mathbf{m}}B_{\mathbf{m}}\sigma_{\mathbf{m}}\ket{\xi^{(N)}}:\nonumber\\
&&\{B_{\hat{\mathbf{m}}}\}_{\hat{\mathbf{m}}\in F_2^{2N}}{\rm\ is\ a\ POVM}\Big\},
\end{eqnarray}
where $\ket{\xi^{(N)}}:=\ket{\phi^*_{\mathbf{0}}}$ is a $N$-qubit state stabilized by $\langle g_1,\cdots, g_N\rangle$. Note that Bob's measurement is optimal in the sense that the maximum of the right-hand side of Eq. \eqref{eq:lowerbound} over $\{B_{\hat{\mathbf{m}}|\mathbf{a}}\}_{\hat{\mathbf{m}}\in F_2^{2N}}$ and $\eta^{(N)}(\mathbf{p},\ket{\xi^{(N)}})$ are the same.

The probability $\eta^{(N)}(\mathbf{p},\ket{\xi^{(N)}})$ can be understood in the scenario of quantum error correction, i.e., Alice sends an $N$-qubits stabilizer state $\ket{\xi^{(N)}}$ to Bob via a noisy channel. In the noisy channel, an error described by a Pauli operator $\sigma_{m}$ occurs on each qubit with probability $p_m$ independently and identically, and Bob tries to detect what types of error occurred. The probability $\eta^{(N)}(\mathbf{p},\ket{\xi^{(N)}})$ is equal to the maximum success probability of the perfect error detection. The existence of quantum error correction code suggests that faithful error detection is possible if the probability of error is less than a certain value. In the following theorem, we show that the probability $\eta^{(N)}(\mathbf{p},\ket{\xi^{(N)}})$ converges to $1$ if the entropy of the probability distribution of error $H(\mathbf{p})$ is less than $1$.

\begin{theorem}
\label{theorem:revival}
 If the entropy satisfies $H(\mathbf{p})<1$, there exist a set of stabilizer states $\{\ket{\xi^{(N)}}\in\mathbb{C}^{2^N}\}_{N\in\mathbb{N}}$ such that $\lim_{N\rightarrow\infty} \eta^{(N)}(\mathbf{p},\ket{\xi^{(N)}})=1$.
\end{theorem}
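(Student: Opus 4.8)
The plan is to reduce the quantum quantity $\eta^{(N)}(\mathbf{p},\ket{\xi^{(N)}})$ to a purely classical one and then to select the stabilizer state by a random-coding (hashing) argument. First I would exploit the stabilizer structure: a state $\ket{\xi^{(N)}}$ stabilized by $\langle g_1,\cdots,g_N\rangle$ with $g_n=\sigma_{\mathbf{s}(n)}$ corresponds to the $N$-dimensional subspace $V=\langle\mathbf{s}(1),\cdots,\mathbf{s}(N)\rangle\subseteq F_2^{2N}$, which is maximally isotropic (Lagrangian), i.e. $V=V^{\perp}$ with respect to the symplectic form $\mathbf{m}^{T}P\mathbf{m}'$ (with $P$ as in Lemma \ref{lemma:unitary}) that governs whether $\sigma_{\mathbf{m}}$ and $\sigma_{\mathbf{m}'}$ commute. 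If $\mathbf{m}+\mathbf{m}'\in V$ then $\sigma_{\mathbf{m}+\mathbf{m}'}$ lies in the stabilizer group up to a sign, so $\sigma_{\mathbf{m}}\ket{\xi^{(N)}}\propto\sigma_{\mathbf{m}'}\ket{\xi^{(N)}}$, whereas if $\mathbf{m}+\mathbf{m}'\notin V=V^{\perp}$ then $\sigma_{\mathbf{m}+\mathbf{m}'}$ anticommutes with some $g_n$ and the two post-error states are orthogonal. Hence $\sigma_{\mathbf{m}}\ket{\xi^{(N)}}$ depends only on the coset $\mathbf{m}+V$ (the error syndrome), these states are mutually orthogonal across cosets, and I expect the exact classical expression
\begin{equation}
 \eta^{(N)}(\mathbf{p},\ket{\xi^{(N)}})=\sum_{\text{cosets }c\text{ of }V}\max_{\mathbf{m}\in c}q_{\mathbf{m}},
\end{equation}
the optimal strategy being to measure the $N$ generators and output the most probable error consistent with the observed syndrome.

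Second, I would lower-bound the right-hand side by the success probability of a suboptimal \emph{typicality decoder} that outputs the unique element of $T(\epsilon)$ lying in the measured coset (and fails otherwise); since any decoder lower-bounds the maximum $\eta^{(N)}$, this is legitimate. Its failure probability is at most $\sum_{\mathbf{m}\notin T(\epsilon)}q_{\mathbf{m}}$, which tends to $0$ by the asymptotic equipartition property (Eq.~\eqref{eq:AEP}), plus the probability of a \emph{collision}, i.e. that a typical $\mathbf{m}$ shares its coset with another typical sequence, namely $\sum_{\mathbf{m}\in T(\epsilon)}q_{\mathbf{m}}\,\mathbb{1}[\exists\,\mathbf{m}'\in T(\epsilon)\setminus\{\mathbf{m}\},\ \mathbf{m}+\mathbf{m}'\in V]$.

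Third, I would bound the collision term by averaging over a uniformly random Lagrangian $V$. The key combinatorial input is that for any fixed nonzero $\mathbf{d}\in F_2^{2N}$ the probability $\pi_N:=\Pr_V[\mathbf{d}\in V]$ is independent of $\mathbf{d}$, because the symplectic group acts transitively on nonzero vectors (all of which are isotropic in characteristic $2$); summing $\mathbb{1}[\mathbf{d}\in V]$ over the $2^{2N}-1$ nonzero $\mathbf{d}$ gives $2^{N}-1$ for every $V$, so $\pi_N=(2^{N}-1)/(2^{2N}-1)=1/(2^{N}+1)$. A union bound then gives an expected collision probability at most $(|T(\epsilon)|-1)\pi_N<2^{N(H(\mathbf{p})+\epsilon)}\cdot 2^{-N}=2^{N(H(\mathbf{p})+\epsilon-1)}$, using $|T(\epsilon)|<2^{N(H(\mathbf{p})+\epsilon)}$. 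When $H(\mathbf{p})<1$ I can fix $\epsilon$ with $H(\mathbf{p})+\epsilon<1$, so both the collision and atypicality terms vanish as $N\rightarrow\infty$; thus the expectation over $V$ of $1-\eta^{(N)}$ tends to $0$, and for each $N$ there exists a Lagrangian $V$ (hence a stabilizer state $\ket{\xi^{(N)}}$, since every Lagrangian is realizable by a suitable sign choice on the generators) attaining at most this expectation. This yields the required sequence with $\eta^{(N)}\rightarrow1$.

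The main obstacle I anticipate is the first step: carefully justifying the collapse to the classical formula, in particular tracking the global phases so that same-coset errors are provably indistinguishable and different-coset errors provably orthogonal, and confirming that the syndrome-plus-maximum-likelihood strategy is genuinely optimal over all POVMs. The random-coding estimate is then routine once $\pi_N$ is pinned down; producing an explicit Lagrangian would be harder, which is why I would argue existence by averaging.
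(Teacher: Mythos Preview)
Your proposal is correct and follows essentially the same random-coding argument as the paper: pick $\ket{\xi^{(N)}}$ via a uniformly random symplectic self-dual (Lagrangian) subspace, reduce $\eta^{(N)}$ to classical syndrome decoding, and bound the expected failure by a typicality-plus-union-bound estimate using $\Pr[\mathbf{c}\in C]=1/(2^{N}+1)$. The only cosmetic differences are that the paper uses the maximum-likelihood decoder rather than your typicality decoder (both yield the same $2^{N(H(\mathbf{p})+\epsilon-1)}$ bound) and obtains $1/(2^{N}+1)$ by explicitly enumerating self-dual subspaces in an appendix rather than by your cleaner transitivity/double-counting argument.
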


\begin{proof}
We show the existence of the set of stabilizer states using the idea of the {\it random coding}.
For any subspace $C\subseteq F_2^{2N}$, the {\it symplectic dual subspace} is defined by 
\begin{equation}
 C^{\bot}=\{\mathbf{u}\in F_2^{2N}:\forall \mathbf{v}\in C,\mathbf{u}\odot\mathbf{v}=0\},
\end{equation}
where $\odot$ denotes the symplectic product: $\mathbf{u}\odot\mathbf{v}=\mathbf{u}^TP\mathbf{v}$. Note that $\dim C+\dim C^{\bot}=2N$. $N$-dimensional subspace $C$ is called {\it symplectic self-dual} if $C=C^{\bot}$, or equivalently,
\begin{equation}
 \forall \mathbf{u},\mathbf{v}\in C, \mathbf{u}\odot \mathbf{v}=0.
\end{equation}

Suppose $\ket{\xi^{(N)}}$ is stabilized by $\langle \sigma_{\mathbf{s}(1)},\cdots,\sigma_{\mathbf{s}(N)}\rangle$, where $\{\mathbf{s}(n)\}_{n=1}^N$ is a basis of $N$-dimensional symplectic self-dual subspace $C$. Since $[\sigma_{\mathbf{s}(m)},\sigma_{\mathbf{s}(n)}]=0$ if and only if $\mathbf{s}(m)\odot\mathbf{s}(n)=0$, $\{\sigma_{\mathbf{s}(n)}\}_{n=1}^N$ is commutative and independent; thus, $\ket{\xi^{(N)}}$ is well defined.
 
Since the state with an error $\mathbf{m}$ is given by
\begin{equation}
 \sigma_{\mathbf{m}}\ket{\xi^{(N)}}\propto\ket{\psi_{GP\mathbf{m}}},
\end{equation}
where $\ket{\psi_{\mathbf{a}}}$ is a state stabilized by $\langle (-1)^{a_1}\sigma_{\mathbf{s}(1)},\cdots,(-1)^{a_N}\sigma_{\mathbf{s}(N)}\rangle$, and $G=(\mathbf{s}(1),\cdots,\mathbf{s}(N))^T$ is a matrix over $F_2$ as defined in the proof of Lemma \ref{lemma:unitary}, and since two states corresponding to errors $\mathbf{m}$ and $\mathbf{m}'$ are distinguishable if and only if $GP\mathbf{m}\neq GP\mathbf{m}'$, the Bob's optimal measurement detecting error is described by $\{B_{\hat{\mathbf{m}}(\mathbf{a})}=\ket{\psi_{\mathbf{a}}}\bra{\psi_{\mathbf{a}}}\}_{\mathbf{a}\in F_2^{N}}$, where $\hat{\mathbf{m}}:F_2^N\rightarrow F_2^{2N}$ is defined by
\begin{equation}
 \hat{\mathbf{m}}(\mathbf{a})=\arg\max_{\mathbf{m}\in F_2^{2N}}\{q_{\mathbf{m}}:GP\mathbf{m}=\mathbf{a}\},
\end{equation}
and for $\hat{\mathbf{m}}'\notin range(\hat{\mathbf{m}})$, $B_{\hat{\mathbf{m}}'}=0$.

Then, the failure probability of the error detection is given by
\begin{eqnarray}
 &&1-\eta^{(N)}(\mathbf{p},\ket{\xi^{(N)}})\\
 &\leq&\sum_{\mathbf{m}\in F_2^{2N}}q_{\mathbf{m}}\mathbf{I}\Big[\exists \mathbf{m}'\in F_2^{2N},\nonumber\\
 &&\mathbf{m}\neq\mathbf{m}'\wedge q_{\mathbf{m}'}\geq q_{\mathbf{m}}\wedge GP(\mathbf{m}+\mathbf{m}')=\mathbf{0}\Big]\\
 &\leq&\sum_{\mathbf{m}\in T(\epsilon)}q_{\mathbf{m}}\sum_{\mathbf{m}'\neq\mathbf{m}}\mathbf{I}\left[q_{\mathbf{m}'}\geq q_{\mathbf{m}}\right]\nonumber\\
 &&\mathbf{I}\left[ GP(\mathbf{m}+\mathbf{m}')=\mathbf{0}\right]+2\exp\left(-2\frac{\epsilon^2}{\Delta^2}N\right),
\end{eqnarray}
where $\mathbf{I}[L]$ is the {\it indicator function}, defined by $\mathbf{I}[L]=1$ if $L$ is true and $\mathbf{I}[L]=0$ if $L$ is false, and $T(\epsilon)$ is a set of typical sequences defined in the proof of Theorem \ref{theorem:upperbound}. Note that we used Eq.\eqref{eq:AEP} to derive the second inequality.

We calculate the expectation value of the failure probability when $N$-dimensional subspace $C$ is randomly sampled from sample space $\Omega=\{C\subset F_2^{2N}:C=C^{\bot}\}$ with a uniform probability. For any $\mathbf{c}(\neq\mathbf{0})\in F_2^{2N}$,
\begin{equation}
\label{eq:symplectic}
 E\left[\mathbf{I}\left[ GP\mathbf{c}=\mathbf{0}\right]\right] 
 =E\left[\mathbf{I}\left[\mathbf{c}\in C\right]\right]
 =\frac{1}{2^{N}+1}<\frac{1}{2^N}.
\end{equation}
The last equation is obtained by calculating the number of symplectic self-dual subspaces as shown in Appendix \ref{appendix:symplectic}. Thus, we obtain
\begin{eqnarray}
&& E\left[1-\eta^{(N)}(\mathbf{p},\ket{\xi^{(N)}})\right]\nonumber\\
 &< &2^{-N}\sum_{\mathbf{m}\in T(\epsilon)}q_{\mathbf{m}}\sum_{\mathbf{m}'\neq\mathbf{m}}\mathbf{I}\left[q_{\mathbf{m}'}\geq q_{\mathbf{m}}\right]\nonumber\\&&\ \ \ \ \ \ \ \ \ \ \ \ \ \ \ \ \ \ \ \ \ \ \ \ \ \ \ \ \ \ \ + 2\exp\left(-2\frac{\epsilon^2}{\Delta^2}N\right)\\
 &<& 2^{N(H(\mathbf{p})+\epsilon-1)}\sum_{\mathbf{m}\in T(\epsilon)}q_{\mathbf{m}}+ 2\exp\left(-2\frac{\epsilon^2}{\Delta^2}N\right)\\
 &\leq& 2^{N\left(H(\mathbf{p})+\epsilon-1\right)}+2\exp\left(-2\frac{\epsilon^2}{\Delta^2}N\right).
\end{eqnarray}
Note that we used Eq.\eqref{eq:rangeqm} to derive the second inequality.
Since there exists subspace $C$ bounded by the right-hand side for any $N$, this completes the proof.
\end{proof}

With Eq. \eqref{eq:lowerbound2}, this theorem implies the success probabilities of the identification, $\gamma_1^{(N)}(\mathbf{p})$ and $\gamma^{(N)}(\mathbf{p})$, converge to $1$ as $N\rightarrow\infty$ if $H(\mathbf{p})<1$. Note that if $H(\mathbf{p})<1$, the {\it quantum state merging} is possible without entanglement \cite{merging}; therefore, we can also construct a one-way LOCC measurement for the success probability $\gamma_1^{(N)}(\mathbf{p})$ to converge to $1$ by using the merging protocol proposed in \cite{merging}. However, our one-way LOCC measurement is easier to implement in the sense that the measurement is sampled from a finite set. Moreover, our measurement shows a closed connection between bipartite state discrimination and error correction.

Using the {\it positive partial transpose} (PPT) criterion \cite{PPT}, the mixed state corresponding to each ensemble $\rho=\sum_{m\in F_2^2}p_m\ket{\Phi_m}\bra{\Phi_m}$ is separable if and only if all the elements of probability vector $\mathbf{p}$ is less than or equal to $1/2$. We summarize properties of mixed state $\rho$ and success probability $\gamma^{(N)}(\mathbf{p})$ in Fig. \ref{fig:result} when probability vector $\mathbf{p}$ is characterized by two parameters, $s$ and $t$, as $\mathbf{p}=(s,t,1-s-t,0)$. As shown in the figure, there exists a region of $\mathbf{p}$, the interior of the white region of Fig. \ref{fig:result} (b), where the success probability of bipartite discrimination, $\gamma^{(N)}(\mathbf{p})$, is less than $1$ for any finite $N$ but converges to that of single-party discrimination as $N\rightarrow \infty$, i.e., the nonlocality of the $N$-fold ensembles asymptotically disappears. Note that mixed state $\rho$ is entangled in the region. On the other hand, if mixed state $\rho$ is separable and success probability $\gamma^{(N)}(\mathbf{p})$ is less than $1$ for some finite $N$, $\gamma^{(N)}(\mathbf{p})$ converges to $0$ as $N\rightarrow \infty$ as shown in Appendix \ref{appendix:mixed}.

\begin{figure}
 \centering
  \includegraphics[height=.16\textheight]{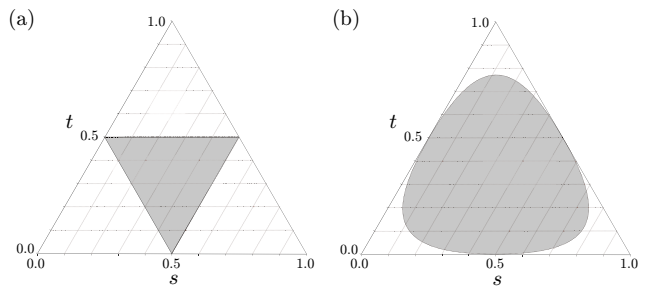}
  \caption{Properties of mixed state $\rho$ and success probability $\gamma^{(N)}(\mathbf{p})$ with probability vector $\mathbf{p}=(s,t,1-s-t,0)$. Probability vector $\mathbf{p}$ is represented by a point in a triangle or on its boundary in oblique coordinates. (a) Mixed state $\rho$ is separable if $\mathbf{p}$ is in the interior of the gray region or on its boundary, and entangled if $\mathbf{p}$ is in the exterior of the gray region. (b) Success probability $\gamma^{(N)}(\mathbf{p})$ is $1$ for any $N$ if $\mathbf{p}$ is on the boundary of the triangle and less than $1$ for any finite $N$ if $\mathbf{p}$ is in the triangle. The interior of the gray region corresponds to $H(\mathbf{p})>1$, where $\lim_{N\rightarrow\infty}\gamma^{(N)}(\mathbf{p})=0$. The exterior of the gray region corresponds to $H(\mathbf{p})<1$, where $\lim_{N\rightarrow\infty}\gamma^{(N)}(\mathbf{p})=1$.}
\label{fig:result}
\end{figure} 

A similar result can be found in \cite{similar}, where $N$-partite discrimination of three states sampled from an ensemble of $N$-copies of three unknown states was investigated; however, we investigate bipartite discrimination of $4^N$ states sampled from an $N$-fold ensemble in this paper.

\section{Bipartite state discrimination as an interactive game}
In general, an interactive game can be formulated by {\it quantum combs} \cite{comb} or {\it quantum strategies} \cite{strategy}. However, for our purpose, it is enough to use a normal quantum circuit description to introduce a two-turn interactive game between a referee and two classically communicating players as shown in Fig. \ref{fig:game}.
\begin{figure}
 \centering
  \includegraphics[height=.14\textheight]{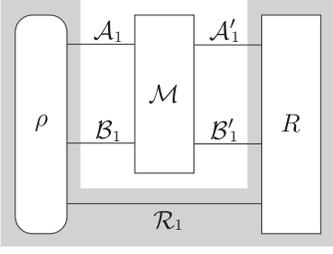}
  \caption{A two-turn interactive game between a referee and two classically communicating players, Alice and Bob. The two-turn interaction consists of quantum communication from the referee to the players and vice versa. The referee's operation (the shaded part) consists of preparing composite system $\mathcal{A}_1\otimes\mathcal{B}_1\otimes\mathcal{R}_1$ in mixed state $\rho$, sending subsystem $\mathcal{A}_1\otimes\mathcal{B}_1$ to the players, and performing a two-valued joint measurement on system $\mathcal{A}'_1\otimes\mathcal{B}'_1$ received from the players and his internal subsystem $\mathcal{R}_1$ to judge the players to win or lose the game. The players perform LOCC operation $\mathcal{M}$ to maximize the winning probability.}
\label{fig:game}
\end{figure} 
The two-turn interactive game consists the three steps:
\begin{enumerate}
 \item The referee prepares composite system $\mathcal{A}_1\otimes\mathcal{B}_1\otimes\mathcal{R}_1$ in mixed state $\rho \in D(\mathcal{A}_1\otimes\mathcal{B}_1\otimes\mathcal{R}_1)$, where $D(\mathcal{H}):=\{\rho\in P(\mathcal{H}):tr(\rho)=1\}$ represents a set of density operators on $\mathcal{H}$, and sends subsystems $\mathcal{A}_1$ and $\mathcal{B}_1$ to Alice and Bob, respectively.
 
 \item Alice and Bob perform LOCC operations described by linear map $\mathcal{M}:P(\mathcal{A}_1\otimes\mathcal{B}_1)\rightarrow P(\mathcal{A}'_1\otimes\mathcal{B}'_1)$ on subsystem $\mathcal{A}_1\otimes\mathcal{B}_1$ and send the resulting systems, $\mathcal{A}'_1\otimes\mathcal{B}'_1$, to the referee.
 
 \item The referee performs a measurement described by POVM $\{R,\mathbb{I}-R\}\subset P(\mathcal{A}'_1\otimes\mathcal{B}'_1\otimes\mathcal{R}_1)$ to judge Alice and Bob to win (corresponding to $R$) or lose (corresponding to $\mathbb{I}-R$) the game.
\end{enumerate}
Then, the maximum winning probability of the players is given by
\begin{equation}
 \chi^{(1)}=\sup\{tr\left(R\mathcal{M}\otimes\mathcal{I}(\rho)\right):\mathcal{M}{\rm\ is\ LOCC}\},
\end{equation}
where $\mathcal{M}\otimes\mathcal{I}:P(\mathcal{A}_1\otimes\mathcal{B}_1\otimes\mathcal{R}_1)\rightarrow P(\mathcal{A}'_1\otimes\mathcal{B}'_1\otimes\mathcal{R}_1)$ is a linear map satisfying $\mathcal{M}\otimes\mathcal{I}(V\otimes W)=\mathcal{M}(V)\otimes W$ for all $V\in P(\mathcal{A}_1\otimes\mathcal{B}_1)$ and $W\in P(\mathcal{R}_1)$. 

The repeated game is an interactive game where the referee simultaneously repeats one particular game independently and judges the players to win the repeated game if the players win all the games.
Thus, the $N$-times repeated game of the two-turn interactive game consists the three steps:
\begin{enumerate}
 \item The referee prepares $N$-copies of composite system $\mathcal{A}_1\otimes\mathcal{B}_1\otimes\mathcal{R}_1$ and sends subsystems $\otimes_{n=1}^N\mathcal{A}_n$ and $\otimes_{n=1}^N\mathcal{B}_n$ to Alice and Bob, respectively, where each composite system is labelled $\mathcal{A}_n\otimes\mathcal{B}_n\otimes\mathcal{R}_n$ $(n=1,\cdots,N)$.
 
 \item Alice and Bob perform LOCC operations described by linear map $\mathcal{M}:P((\otimes_{n=1}^N\mathcal{A}_n)\otimes(\otimes_{n=1}^N\mathcal{B}_n))\rightarrow P((\otimes_{n=1}^N\mathcal{A}_n')\otimes(\otimes_{n=1}^N\mathcal{B}_n'))$ and send the resulting systems to the referee, where each $\mathcal{A}_n'$ ($\mathcal{B}_n'$) has the same dimension as $\mathcal{A}'_1$ ($\mathcal{B}'_1$) in the single game.
 
 \item The referee performs a measurement described by POVM $\{R^{\otimes N},\mathbb{I}-R^{\otimes N}\}$ to judge Alice and Bob to win (corresponding to $R^{\otimes N}$) or lose (corresponding to $\mathbb{I}-R^{\otimes N}$) the game, where $\{R,\mathbb{I}-R\}\subset P(\mathcal{A}'_n\otimes\mathcal{B}'_n\otimes\mathcal{R}_n)$.
\end{enumerate}
Then, the maximum winning probability of the players is given by
\begin{equation}
  \chi^{(N)}=\sup\{tr\left(R^{\otimes N}\mathcal{M}\otimes\mathcal{I}(\rho^{\otimes N})\right):\mathcal{M}{\rm\ is\ LOCC}\},
\end{equation} 
where the order of the Hilbert spaces is appropriately permuted in $R^{\otimes N}$ and $\rho^{\otimes N}$.

A parallel repetition conjecture of an interactive game holds if $\chi^{(1)}<1$ implies $\chi^{(N)}<c^N$ with some constant $c<1$. It is easy to verify that the bipartite discrimination of states sampled from an ensemble comprising the Bell states task is a two-turn interactive game by setting
\begin{eqnarray}
 \rho&=&\sum_{m\in F_2^2}p_m\ket{\Phi_m}\bra{\Phi_m}^{(\mathcal{A}_n\otimes\mathcal{B}_n)}\otimes\ket{m}\bra{m}^{(\mathcal{R}_n)},\\
 R&=&\sum_{m\in F_2^2} \ket{m}\bra{m}^{(\mathcal{A}'_n)}\otimes \ket{m}\bra{m}^{(\mathcal{B}'_n)}\otimes \ket{m}\bra{m}^{(\mathcal{R}_n)},
\end{eqnarray}
where each subsystem $\mathcal{R}_n$ stores the label of a Bell state the referee sampled. The maximum winning probability of the players, $\chi^{(1)}$, is equal to the success probability of the identification, $\gamma^{(1)}(\mathbf{p})$. The maximum winning probability of the $N$-times repeated game, $\chi^{(N)}$, is equal to the success probability of the identification of an $N$-fold ensemble, $\gamma^{(N)}(\mathbf{p})$. Theorem \ref{theorem:revival} shows that there exist probability vectors $\mathbf{p}$ such that $\chi^{(N)}$ converges to $1$ as $N\rightarrow \infty$ while $\chi^{(1)}<1$, which is a remarkable counterexample of the parallel repetition conjecture.

\section{Conclusion and Discussion}
We have investigated bipartite discrimination of states sampled from an $N$-fold ensemble comprising the Bell states. We showed that the success probability of the perfect identification, $\gamma^{(N)}(\mathbf{p})$, converges to $1$ as $N\rightarrow\infty$ if the entropy of the probability distribution associated with each ensemble, $H(\mathbf{p})$, is less than $1$, even if $\gamma^{(N)}(\mathbf{p})<1$ for any finite $N$, namely, the nonlocality of the $N$-fold ensemble asymptotically disappears. Furthermore, the disappearance of the nonlocality can be regarded as a remarkable counterexample of the parallel repetition conjecture of interactive games with two classically communicating players.

Conversely, if $H(\mathbf{p})>1$, the quantum state merging is impossible without entanglement; moreover, we showed that $\gamma^{(N)}(\mathbf{p})$ converges to $0$ as $N\rightarrow\infty$. Therefore, our result also demonstrates a significant gap of the distinguishability with respect to $\gamma^{(N)}(\mathbf{p})$ between a mergeable ensemble and a non-mergeable ensemble.
Note that there does not always exist such a gap for ensembles comprising more general states, e.g., $N$-fold ensemble formed by ensemble $\{(s,\ket{01}),(t,\ket{10}),(1-s-t,\ket{\Phi_{00}})\}$ is perfectly distinguishable for any probability vector $(s,t,1-s-t)$ and any $N$ while the ensemble is not mergeable for particular probability vectors as shown in Fig. \ref{fig:merging}.
There remains a future work in investigating the gap of the distinguishability between a mergeable ensemble and a non-mergeable ensemble comprising more general states.

\begin{figure}
 \centering
  \includegraphics[height=.15\textheight]{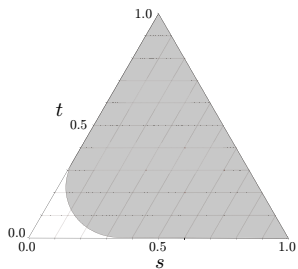}
  \caption{The possibility of quantum state merging for ensemble $\{(s,\ket{01}),(t,\ket{10}),(1-s-t,\ket{\Phi_{00}})\}$. In the gray region, the quantum state merging is impossible without entanglement, on the other hand, it is possible in the white region without entanglement}
\label{fig:merging}
\end{figure}

We can also discuss our result in another context. Intuitively, the optimal distinguishability in $N$ independent subsystems can be achieved by performing measurements on each subsystem independently. Indeed, in the case of single-party discrimination of quantum states sampled from independent (but not necessarily identical) ensembles, independent measurements can extract as much information about the composite system as any joint measurement \cite{productrule1} as depicted in Fig. \ref{fig:discussion} (a) and (b). The result was extended to the joint estimation of the parameters encoded in independent processes, where the optimal joint estimation can be achieved by estimating each process independently \cite{productrule2}.

However, our result shows that the optimal distinguishability of an $N$-fold ensemble cannot be achieved by independent LOCC measurement as depicted in Fig. \ref{fig:discussion} (c) but can be achieved by joint LOCC measurement, where Alice and Bob perform entangled measurement within their own system. 

\begin{figure}
 \centering
  \includegraphics[height=.15\textheight]{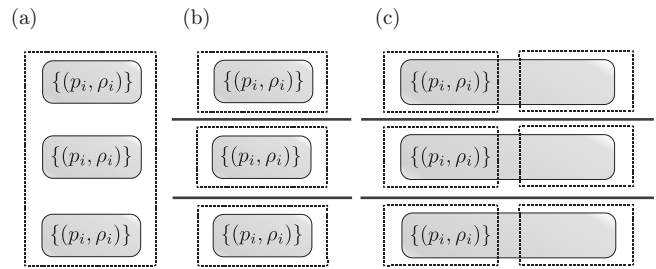}
  \caption{Graphical representations of three types of state discrimination where the state of each subsystem is randomly sampled from independent ensemble $\{(p_i,\rho_i)\}$. Rounded rectangles represent quantum systems and dotted rectangles represent subsystems where joint quantum operations can be performed. Communication across the bold line is forbidden. (a) Single-party state discrimination by joint measurement. (b) Single-party state discrimination by independent measurement. (c) Bipartite state discrimination by independent LOCC measurement.}
\label{fig:discussion}
\end{figure} 

\acknowledgements
We are greatly indebted to Seiichiro Tani, Kohtaro Suzuki, Hiroki Takesue, Mio Murao, Marco Tulio Quintino, Mateus Araujo, and Takuya Ikuta for their valuable discussions.

\appendix

\section{Upper bound of LOCC measurements}
\label{appendix:upperbound}
In this Appendix, we derive the upper bound in Eq. \eqref{eq:upperbound} by applying the following elementary lemma:

 \begin{lemma}
 \label{lemma:satprob}
 Let $\{a_k\in[0,1]\}_{k=1}^K$ be a set of real numbers, and $(\lambda_1,\cdots,\lambda_K)$ be a probability vector. If $\sum_{k=1}^K a_k\leq \tilde{K}$ for a non-negative integer $\tilde{K}\leq K$, then
 \begin{equation}
 \sum_{k=1}^K \lambda_ka_k\leq\max\left\{\sum_{k\in X}\lambda_k:X\subseteq [K],|X|= \tilde{K}\right\},
\end{equation}
where $[K]=\{1,2,\cdots,K\}$.
\end{lemma}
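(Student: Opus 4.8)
The plan is to reduce the inequality to a one-dimensional comparison after sorting the weights. First I would note that the right-hand side is a maximum of $\sum_{k\in X}\lambda_k$ over all size-$\tilde{K}$ subsets $X\subseteq[K]$, and since every $\lambda_k\geq 0$ this maximum is attained by taking $X$ to consist of the $\tilde{K}$ indices carrying the largest values of $\lambda_k$. Hence, after relabelling the indices (which leaves both the objective $\sum_k\lambda_k a_k$ and the constraint $\sum_k a_k\leq\tilde{K}$ invariant), I may assume without loss of generality that $\lambda_1\geq\lambda_2\geq\cdots\geq\lambda_K$, so that the claim becomes $\sum_{k=1}^K\lambda_k a_k\leq\sum_{k=1}^{\tilde{K}}\lambda_k$.

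Next I would study the difference
\begin{equation}
D:=\sum_{k=1}^{\tilde{K}}\lambda_k-\sum_{k=1}^K\lambda_k a_k=\sum_{k=1}^{\tilde{K}}\lambda_k(1-a_k)-\sum_{k=\tilde{K}+1}^{K}\lambda_k a_k,
\end{equation}
and show $D\geq 0$. The idea is to introduce a threshold value $\mu$ with $\lambda_{\tilde{K}+1}\leq\mu\leq\lambda_{\tilde{K}}$ (for instance $\mu=\lambda_{\tilde{K}}$), which by the sorted order satisfies $\lambda_k\geq\mu$ for $k\leq\tilde{K}$ and $\lambda_k\leq\mu$ for $k>\tilde{K}$. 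Since $a_k\in[0,1]$ gives both $1-a_k\geq 0$ and $a_k\geq 0$, monotonicity yields $\sum_{k=1}^{\tilde{K}}\lambda_k(1-a_k)\geq\mu\sum_{k=1}^{\tilde{K}}(1-a_k)$ and $\sum_{k=\tilde{K}+1}^{K}\lambda_k a_k\leq\mu\sum_{k=\tilde{K}+1}^{K}a_k$, so that
\begin{equation}
D\geq\mu\left(\sum_{k=1}^{\tilde{K}}(1-a_k)-\sum_{k=\tilde{K}+1}^{K}a_k\right).
\end{equation}

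Finally I would invoke the hypothesis $\sum_{k=1}^K a_k\leq\tilde{K}$: the bracketed quantity equals $\tilde{K}-\sum_{k=1}^K a_k\geq 0$, and since $\mu\geq 0$ (because $\lambda$ is a probability vector) this forces $D\geq 0$, which is exactly the claim. The only point requiring care is the choice of the threshold $\mu$ that simultaneously bounds the two groups of weights from the correct sides, and this is precisely what the sorting buys; everything else is monotonicity of the objective in each $a_k$ together with the single linear constraint. Conceptually this is just the statement that the linear functional $\sum_k\lambda_k a_k$ over the polytope $\{a\in[0,1]^K:\sum_k a_k\leq\tilde{K}\}$ is maximized at the vertex assigning full weight to the $\tilde{K}$ largest $\lambda_k$, so I do not anticipate any genuine obstacle.
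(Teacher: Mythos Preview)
Your proof is correct and follows essentially the same line as the paper's: identify the optimal set $X^*$ of the $\tilde{K}$ largest weights, write the difference $\sum_{k\in X^*}\lambda_k-\sum_k\lambda_k a_k=\sum_{k\in X^*}\lambda_k(1-a_k)-\sum_{k\notin X^*}\lambda_k a_k$, then use a threshold between the two blocks of weights together with $a_k\in[0,1]$ and $\sum_k a_k\leq\tilde{K}$ to conclude nonnegativity. The only cosmetic difference is that the paper keeps two separate thresholds $\mu=\min_{k\in X^*}\lambda_k$ and $\nu=\max_{k\notin X^*}\lambda_k$ and finishes via $(\tilde{K}-\alpha)(\mu-\nu)\geq 0$, whereas you collapse them to a single $\mu=\lambda_{\tilde{K}}$ and read off $\mu\bigl(\tilde{K}-\sum_k a_k\bigr)\geq 0$ directly; your version is marginally cleaner but the argument is the same.
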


\begin{proof}
Suppose $X^*$ maximizes the right-hand side, and let $X^{*c}=[K]\setminus X^*$ be the complement of $X^*$. Let $\alpha=\sum_{k\in X^*}a_k$ and $\beta=\sum_{k\in X^{*c}}a_k$. Then $ \tilde{K}-\alpha\geq\beta\geq 0$. Let $\mu=\min\{\lambda_k:k\in X^*\}$ and $\nu=\max\{\lambda_k:k\in X^{*c}\}$. Then $\mu\geq \nu\geq 0$, and we obtain
 
\begin{eqnarray}
&& \sum_{k\in X^*}\lambda_k-\sum_{k=1}^K \lambda_ka_k\nonumber\\
&=&\sum_{k\in X^*}\lambda_k(1-a_k)-\sum_{k\in X^{*c}}\lambda_ka_k\\
 &\geq&\sum_{k\in X^*}\mu(1-a_k)-\sum_{k\in X^{*c}}\nu a_k\\
 &=&\tilde{K}\mu-\alpha \mu-\beta \nu\\
 &\geq&(\tilde{K}-\alpha)(\mu-\nu)\geq0.
\end{eqnarray}
\end{proof}

Eq.\eqref{eq:Nanthan} implies that for any LOCC measurement $\{M_{\hat{\mathbf{m}}}\}_{\hat{\mathbf{m}}}$,
\begin{equation}
 \sum_{\mathbf{m}\in F_2^{2N}}\bra{\Psi_{\mathbf{m}}}M_{\mathbf{m}}\ket{\Psi_{\mathbf{m}}}\leq 2^N.
\end{equation}
Since $\bra{\Psi_{\mathbf{m}}}M_{\mathbf{m}}\ket{\Psi_{\mathbf{m}}}\in [0,1]$, by applying Lemma \ref{lemma:satprob}, we obtain
\begin{eqnarray}
 &&\sum_{\mathbf{m}\in F_2^{2N}}q_{\mathbf{m}}\bra{\Psi_{\mathbf{m}}}M_{\mathbf{m}}\ket{\Psi_{\mathbf{m}}}\nonumber\\
 &\leq&\max\left\{\sum_{\mathbf{m}\in X}q_{\mathbf{m}}:X\subseteq F_2^{2N},|X|= 2^N\right\}
\end{eqnarray}
for any LOCC measurement $\{M_{\hat{\mathbf{m}}}\}_{\hat{\mathbf{m}}}$. Hence, the upper bound in Eq. \eqref{eq:upperbound} is derived.

\section{Asymptotic equipartition property}
\label{appendix:AEP}
In this Appendix, we derive Eq. \eqref{eq:AEP}.
Define a set of indices of the Bell states associated with non-zero probability $F=\{x\in F_2^2:p_x>0\}$. Let $\Omega=F^N$ be a sample space and $p(\mathbf{m})=q_{\mathbf{m}}$ be the probability mass function. Define random variables $Y_n(\mathbf{m})=-\log p_{m_n}$ and $Y(\mathbf{m})=\frac{1}{N}\sum_{n=1}^N Y_n(\mathbf{m})$, which are well-defined for $\mathbf{m}=(m_1,\cdots,m_N)\in\Omega$. Then $\{Y_n\}_{n=1}^N$ are mutually independent random variables, and
\begin{eqnarray}
 E[Y]=E[Y_n]=H(\mathbf{p}),\\
 \Delta:=\max\{\log p_{max}-\log p_{min},\delta\},
\end{eqnarray}
where $p_{max}=\max\{p_x:x\in F\}$, $p_{min}=\min\{p_x:x\in F\}$ and $\delta$ is an arbitrary positive real number. By Hoeffding's's inequality, for any $\epsilon>0$,
\begin{eqnarray}
 Pr[|Y-E[Y]|\geq\epsilon]\leq 2\exp\left(-2\frac{\epsilon^2}{\Delta^2}N\right).
\end{eqnarray}
Since
\begin{eqnarray}
 Pr[|Y-E[Y]|\geq\epsilon]&=&\sum_{\mathbf{m}\in \Omega\setminus T(\epsilon)}q_{\mathbf{m}}\\ &=&\sum_{\mathbf{m}\in F_2^{2N}\setminus T(\epsilon)}q_{\mathbf{m}},
\end{eqnarray}
Eq. \eqref{eq:AEP} is derived.

\section{Number of symplectic self-dual subspaces}
\label{appendix:symplectic}
In this Appendix, we calculate the size of $\Omega=\{C\subset F_2^{2N}:C=C^{\bot}\}$ and $\Omega_{\mathbf{c}}=\{C\in \Omega:\mathbf{c}\in C\}$, and show that $|\Omega_{\mathbf{c}}|/|\Omega|=1/(2^N+1)$ for any $\mathbf{c}(\neq\mathbf{0})\in F_2^{2N}$, which implies the last equation in Eq. \eqref{eq:symplectic}.

Any symplectic self-dual subspace of $F_2^{2N}$ can be constructed by the following procedure:
\begin{enumerate}
 \item Set $C_0=\{\mathbf{0}\in F_2^{2N}\}$ and $n=0$.
 \item Choose $\mathbf{s}(n+1)\in F_2^{2N}$ so that $\mathbf{s}(n+1)\in C_{n}^{\bot}$ and $\mathbf{s}(n+1)\notin C_{n}$.
 \item Set $C_{n+1}=span\{\mathbf{s}(m)\}_{m=1}^{n+1}$ and increase $n$ by one.
 \item Repeat step 2 to step 3 until no $\mathbf{s}\in F_2^{2N}$ satisfies the condition in step 2.
\end{enumerate}
Since $\{\mathbf{s}(n)\}_n$ is linearly independent, $\dim C_n=n$. Since $\mathbf{s}(k)\odot\mathbf{s}(l)=0$ for any $k$ and $l$, $C_n\subseteq C_n^{\bot}$. Thus, using the procedure, we can obtain symplectic self-dual subspace $C_N$ and its basis $\{\mathbf{s}(n)\}_{n=1}^N$. Conversely, we can easily verify that any symplectic self-dual subspace and any its basis are constructed by the procedure.

In the procedure, we obtain $\prod_{n=0}^{N-1}(2^{2N-n}-2^{n})$ different families of linearly independent vectors $\{\mathbf{s}(n)\}_{n=1}^N$. For any $N$-dimensional subspace $C_N$, there exist $\prod_{n=0}^{N-1}(2^{N}-2^{n})$ different families $\{\mathbf{s}(n)\}_{n=1}^N$ each of which is a basis of $C_N$. Therefore, the number of symplectic self-dual subspaces is given by
\begin{equation}
 |\Omega|=\frac{\prod_{n=0}^{N-1}(2^{2N-n}-2^{n})}{\prod_{n=0}^{N-1}(2^{N}-2^{n})}.
\end{equation}

If we choose $\mathbf{s}(1)=\mathbf{c}(\neq\mathbf{0})$, we obtain $\prod_{n=1}^{N-1}(2^{2N-n}-2^{n})$ different families $\{\mathbf{c},\mathbf{s}(2),\cdots,\mathbf{s}(N)\}$. For any $N$-dimensional subspace $C_N$ containing $\mathbf{c}(\neq\mathbf{0})$, there exist $\prod_{n=1}^{N-1}(2^{N}-2^{n})$ different families $\{\mathbf{c},\mathbf{s}(2),\cdots,\mathbf{s}(N)\}$ each of which is a basis of $C_N$. Therefore, the number of symplectic self-dual subspaces containing $\mathbf{c}(\neq\mathbf{0})$ is given by
\begin{equation}
 |\Omega_{\mathbf{c}}|=\frac{\prod_{n=1}^{N-1}(2^{2N-n}-2^{n})}{\prod_{n=1}^{N-1}(2^{N}-2^{n})}.
\end{equation}

\section{Identification in separable ensembles}
\label{appendix:mixed}
If the mixed state corresponding to each ensemble, $\rho=\sum_{m\in F_2^2}p_m\ket{\Phi_m}\bra{\Phi_m}$, is separable, $2p_m\leq 1$ for all $m\in F_2^2$. Using an equation
\begin{equation}
\label{eq:relativeentropy}
 \sum_{m\in F_2^2}p_m\log (2p_m)=1-H(\mathbf{p}),
\end{equation}
we obtain that if $\rho$ is separable, $H(\mathbf{p})\geq 1$ with equality occurring only when the number of non-zero elements in a probability vector $\mathbf{p}$ is $2$. Using Theorem \ref{theorem:upperbound}, we can verify that if $\rho$ is separable and the success probability of the identification, $\gamma^{(N)}(\mathbf{p})$, is less than $1$ for some finite $N$, $\gamma^{(N)}(\mathbf{p})$ converges to $0$ as $N\rightarrow \infty$.

\nocite{*}
\bibliographystyle{eptcs}
\bibliography{generic}

\end{document}